\documentclass[conference]{IEEEtran}
\IEEEoverridecommandlockouts

\usepackage{cite}
\usepackage{amsmath,amssymb,amsfonts}
\usepackage{graphicx}
\usepackage{textcomp}
\usepackage{xcolor}
\usepackage{comment}

\usepackage{physics,algorithm,algpseudocode}
\usepackage{amsthm}
\newtheorem{theorem}{Theorem}
\newtheorem{corollary}{Corollary}[theorem]

\def\BibTeX{{\rm B\kern-.05em{\sc i\kern-.025em b}\kern-.08em
    T\kern-.1667em\lower.7ex\hbox{E}\kern-.125emX}}

\begin{document}

\title{Quantum Network Routing based on Surface Code Error Correction\\
}


\author{
\IEEEauthorblockN{Tianjie Hu, Jindi Wu, and Qun Li}
\IEEEauthorblockA{\text{
Department of Computer Science, 
William \& Mary, 
Williamsburg, VA 23185, USA}
}}

\maketitle

\thispagestyle{plain}
\pagestyle{plain}

\newcommand{\RR}{\mathbb{R}}
\newcommand{\CC}{\mathbb{C}}
\newcommand{\dsp}{\displaystyle}




\begin{abstract}
\textbf{ } Quantum networks encounter unavoidable channel noises and erasure errors, presenting a huge obstacle in designing protocols that attain both high reliability and efficiency. Typically, quantum networks fall into two categories: 
those utilize quantum entanglements for quantum teleportation, and those directly transfer the actual quantum messages.
In this paper, we present SurfNet, a quantum network that inherits the main advantages from both categories. It employs surface codes as logical qubits for encoding messages, and utilizes two parallel communication channels to fault-tolerantly transfer each surface code in a modular manner. 
Our approach of using surface codes can timely correct both operational and photon loss errors within the network, and the integration of the two channels within the network can greatly improve network throughput. 
For the implementation of SurfNet, we propose a novel network architecture, designed to better integrate surface codes into quantum networks. We also propose a novel error correction decoder, designed to fully utilize the modular characteristic of surface codes within our network. 
Simulation results demonstrate that SurfNet with its decoder significantly enhances the communication fidelity within quantum networks.
\end{abstract}

\begin{IEEEkeywords}
\textbf{ }Quantum Network; Network Routing; Surface Code; Error Correction
\end{IEEEkeywords}

\section{Introduction}
Quantum networks find applications in various domains such as distributed quantum machine learning, quantum cryptography, quantum key distribution, and more~\cite{nielsen2002quantum, tengner2008photonic, wu2022scalable, wu2024distributed}. 
Due to the intricacies of quantum mechanics, designing quantum networks requires carefully managing both reliability and efficiency. It has been a prominent area of research for several decades, with numerous research groups proposing various quantum network designs. These designs can be broadly classified into two schemes based on how they transmit quantum messages within the network.

The first quantum network scheme~\cite{briegel1998quantum, zeng2022multi, shi2020concurrent, qiao2022quantum, chen2022heuristic, farahbakhsh2022opportunistic, li2022fidelity, zhao2022e2e} utilizes quantum entanglements for quantum teleportation across nodes. As quantum message is inherently unstable and highly susceptible to environmental noises, in this scheme, a series of quantum entangling and swapping operations are performed among intermediate nodes between the sender and receiver. The sender and receiver will then attain a pair of entangled qubits, which can be utilized for quantum teleportation to teleport the quantum message. 
However, in practice, this scheme faces inefficiency issues due to the short lifespan of entangled pairs. Even worse, generation of entanglements across adjacent nodes is a probabilistic process, which significantly hinders distant quantum teleportation, making it highly time-consuming. In addition, a large amount of classical communications across nodes are required for both entanglement swapping and quantum teleportation, frequently constituting a large portion of the total communication time.

The second quantum network scheme~\cite{fowler2010surface, muralidharan2014ultrafast, hu2023surfacenet} transfers the actual quantum message through optical fibers, and often utilizes logical qubits to carry and preserve the message. In this scheme, the actual transmission instead of teleportation can effectively address the inefficiency observed in the previous scheme, and logical qubits are utilized to address the instability of quantum message.
A logical qubit is typically represented by multiple physical qubits, and its state is determined collectively by these physical qubits. This encoding provides resilience against random noise, through its inherent entanglement and redundancy.
However, it comes with the drawback of significantly increasing the traffic in networks since to transmit a single logical qubit, multiple physical qubits must be transmitted.

\begin{figure}[t]
\centerline{\includegraphics[width= \linewidth]{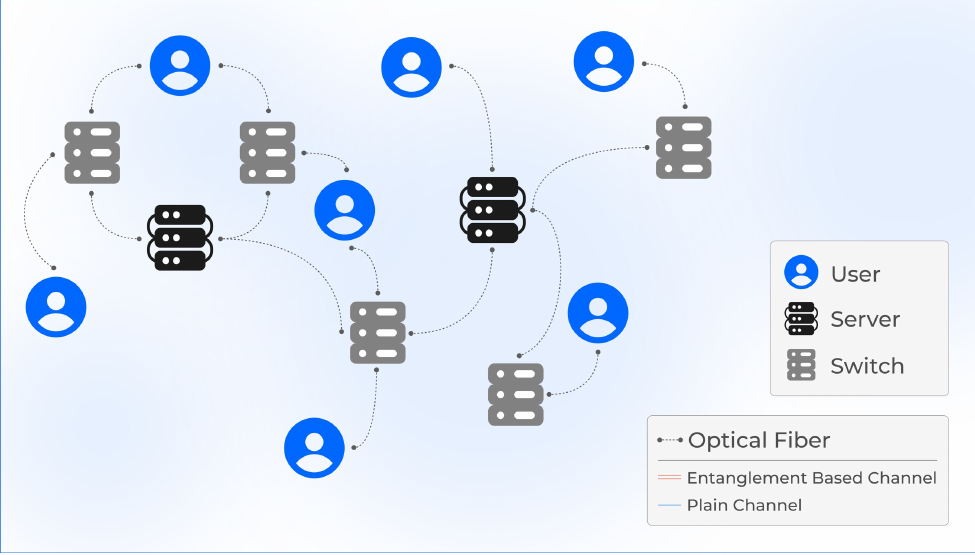}} 
\caption{\text{SurfNet quantum network.} Users communicate with each other through switches and servers, interconnected by entanglement-based channels and plain channels within optical fibers.}
\label{fig:network_3}
\end{figure}

In this paper, we present \textit{SurfNet}, a new quantum networking strategy that combines techniques from both quantum network schemes above. We aim to simultaneously achieve high network throughput and communication fidelity. Throughout this paper, \textit{throughput} is defined as the number of concurrent communications, and \textit{fidelity} is defined as the success rate of each communication. 
SurfNet employs surface codes as logical qubits for encoding communication messages, to effectively cope with errors induced by noise within networks. The choice of surface codes is motivated by their compact code size and high error threshold. Incorporating them into quantum networks offers a solution to promptly address operational errors and photon losses within the network via timely performing error corrections on surface codes.

In the meantime, to minimize traffic within networks, SurfNet chooses the surface codes with low code distances, which have fewer physical qubits to be transmitted but are normally with higher logical error rate. However, in SurfNet, we are able to lower down their logical error rates with the dual-channel design within SurfNet. In specific, each surface code in SurfNet is transferred in a modular manner as two parts: the \textit{Core} part, consisting of qubits critical to the logical error rate, and the \textit{Support} part, containing the remaining qubits which are less critical but still essential for error corrections. These two parts are transmitted separately using two parallel communication channels: the \textit{entanglement-based channel} that transfers the Core part utilizing quantum teleportation, and the \textit{plain channel} that transfers the Support part as photons. This dual-channel approach strikes for a balance between the reliability and efficiency of quantum networks, and addresses the limitations posed by the low entanglement generation rates. As demonstrated later in the routing protocol design, a coordinated use of both channels can effectively achieve high network throughput and communication fidelity.

We also present a new error correction decoder specifically designed for SurfNet. It is based on the finding that physical qubits within the same surface code may experience distinct error rates, and exploiting these variations can improve error correction accuracy. In SurfNet, this finding is essential since the error rates are highly different at the Core and Support parts on each surface code. Thus, we design the \textit{SurfNet Decoder} to effectively incorporate these distinct error rates with the observed syndromes for efficiently decoding the most probable error pattern. In this paper, we focus solely on the Pauli errors and erasure errors, and decoherence errors are handled separately via error mitigation techniques at each node within the network. For the ease of discussion, w e assume perfect measurements, but we note that our analysis can be readily generalized to include any type of quantum errors, by updating the error rate on each physical qubit.

Our main contributions are summarized as follows:

\begin{itemize}
\item We present the first dual-channel quantum network, incorporating both the entanglement-based channel and the plain channel. The end-to-end communication procedure described in this paper efficiently utilizes both channels for transferring surface codes with low error rate.
\item We present a novel error correction decoder that analyzes surface codes in a modular manner and performs the decoding accordingly. We evaluate its performance against the popular Union-Find decoder. We also illustrate how conventional error correction decoders can be adapted for SurfNet. 
\item We implement the routing protocol of SurfNet as an integer programming problem. We evaluate its performance against a baseline model and the mainstream entanglement-based networks.
\end{itemize}

\section{Related Work}

Quantum networks have emerged as a promising solution for connecting distant quantum devices. Current physical quantum networks~\cite{liao2018satellite, diamanti2016practical, Wehner2018QI} are still in their early stages, constrained by the current capabilities of qubit processing and storage, as well as the high costs associated with establishing photonic channels and repeater nodes.
Efforts have been directed towards leveraging entanglement between nodes for quantum teleportation~\cite{bennett1992communication}. Specifically, techniques such as entanglement purification~\cite{li2022fidelity, zhao2022e2e} have been employed to enhance the quality of entangled pairs shared across inferior or distant optical fibers.

Extensive research has been conducted in the field of quantum network routing~\cite{zeng2022multi, shi2020concurrent, qiao2022quantum, chen2022heuristic, farahbakhsh2022opportunistic, panigrahy2023scalable, yu2021protocols, gu2023fendi}. Various research groups are actively exploring general architecture designs for quantum networks, including repeaters or routers~\cite{munro2015inside, vasantam2022throughput, lee2022quantum}, as well as the development of photonic channels~\cite{zhou2023simulator, sun2016quantum, de2022quantum}. 
In the realm of fault tolerant  communication, efforts have been made in~\cite{fowler2010surface, muralidharan2014ultrafast, hua2021autobraid, hu2023surfacenet}. For surface code, different decoders~\cite{fowler2013minimum, higgott2023sparse, delfosse2021almost} have been proposed to enhance the error threshold of surface codes, thereby reducing the fidelity requirements for optical fiber constructions. These advancements contribute to the overall efficiency and reliability of quantum networks.

\section{Background}

\subsection{Qubits}

Different from classical computing bits, which can only be either 0 or 1, qubits in quantum computing exist as linear combinations of the basis states $\ket{0}$ and $\ket{1}$. This property, known as superposition, allows a qubit to exist in a continuous space between these two basis states, commonly visualized as the Bloch sphere. The specific value of a qubit as either $\ket{0}$ or $\ket{1}$ is only determined upon a quantum measurement. Exploiting superposition, each qubit can simultaneously represent both classical values (0 and 1) of information. Moreover, when $n$ qubits are employed together as an input, this advantage grows exponentially, enabling them to encode $2^n$ pieces of classical information simultaneously. However, the inherent nature of qubits being in superposition also makes them highly susceptible to instability~\cite{google2023suppressing, kim2023evidence}, vulnerable to environmental noise, and prone to random collapse. 
Common errors affecting single qubits are Pauli errors related to the three quantum Pauli gates $\{X,Y,Z\}$. Each of the three Pauli gates flips a qubit based on the corresponding $X/Y/Z$-axis on the Bloch sphere. And Pauli errors are defined as situations where random Pauli gates are unintentionally applied to qubits, often due to environmental noise or crosstalk within circuits. 
Thus, instead of directly utilizing the unstable physical qubits, numerous papers have proposed designs for quantum code designs. These quantum codes are composed of clusters of physical qubits and function as logical qubits~\cite{gottesman1997stabilizer}.

\subsection{Surface Codes}

\begin{figure}[t]
\centerline{\includegraphics[scale=.4]{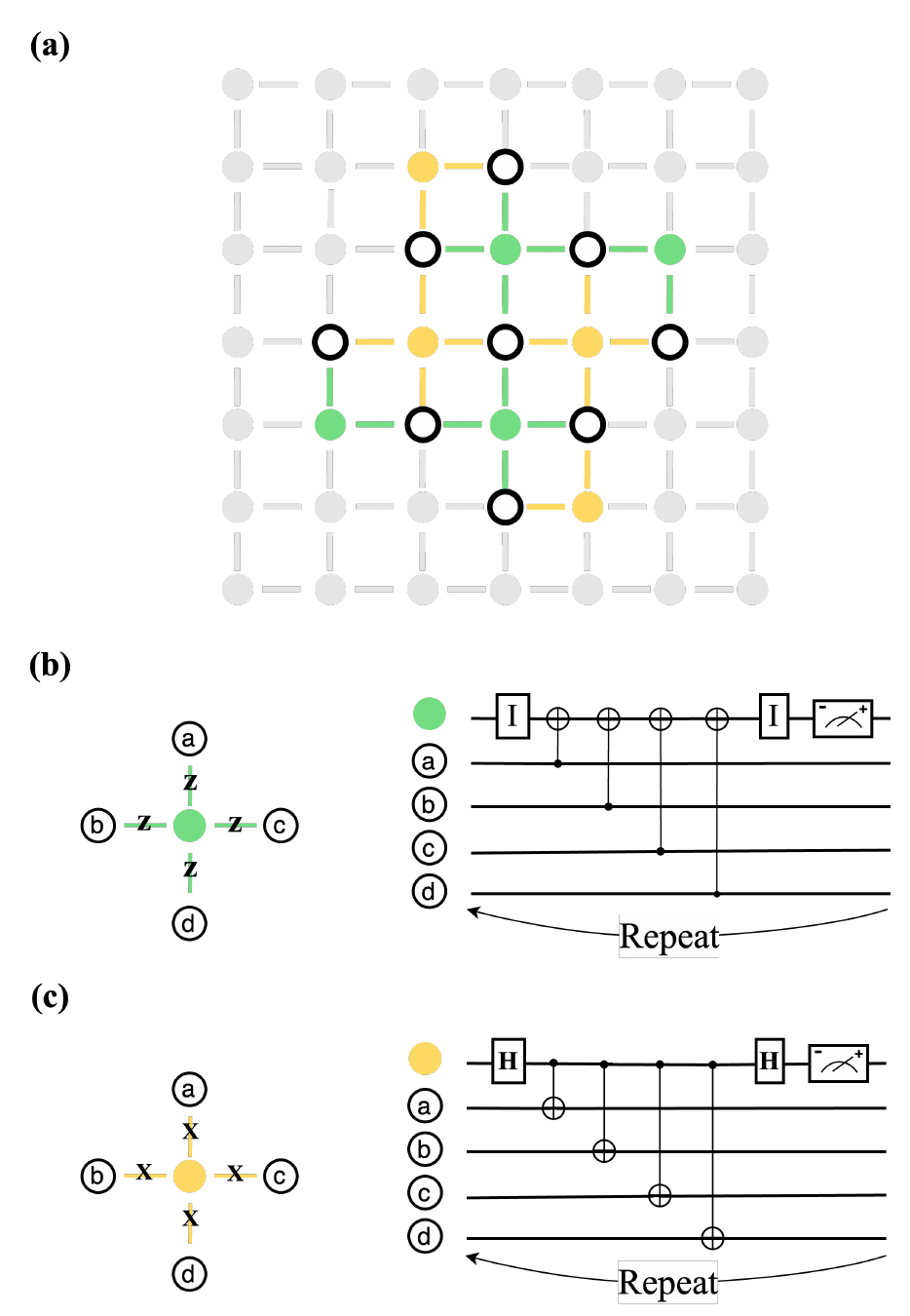}}
\color{black}
\caption{ (a) \textbf{Example distance-3 surface code}, for which distance refers to the minimum number of data qubits required to perform a logical operation. Open circles represent data qubits, and colored dots represent measurement qubits (green for measure-Z, yellow for measure-X). (b)(c) Quantum circuits of measure-Z and measure-X.}
\label{fig:sfcode}
\end{figure}


Among various quantum error correction codes, surface code stands out as the most promising one due to its compact size and high error threshold. Illustrated in Fig.~\ref{fig:sfcode}(a), each surface code consists of two groups of physical qubits: data qubits and measurement qubits. Both groups are essentially interconnected physical qubits through quantum gates, yet they play distinct roles within the surface code. The logical qubit information is encoded and stored within data qubits, while measurement qubits are utilized to preserve information integrity. Each data qubit is connected to its neighboring measurement qubits, comprising two measure-X qubits and two measure-Z qubits, except on the boundaries. And if an error occurs in a specific data qubit, its neighboring measurement qubits would reflect the error: measure-X qubits reflect Y or Z types errors, and measure-Z qubits reflect X or Y types errors. Hence, measurement qubits are also known as syndrome qubits or, in code theory, ancilla qubits.

Serving as logical qubits within quantum networks, each surface code encodes a single logical qubit carrying a one-qubit quantum state. There also exist variants on surface codes such as X-cut, Z-cut, or multi-cuts surface codes~\cite{fowler2012surface} that can encode multiple logical qubits in a single topology, but for the purpose of this paper, we focus on the use of surface codes serving as single logical qubits. 
The mechanism underlying surface codes relies on quantum entanglements and reduction on the degrees of freedom. For a comprehensive introduction into Quantum Computation and Quantum Information, we refer readers to~\cite{nielsen2002quantum}, and for a detailed explanation on Surface Codes,~\cite{fowler2012surface} is recommended.


\begin{figure}[t]
\centerline{\includegraphics[scale=.5]{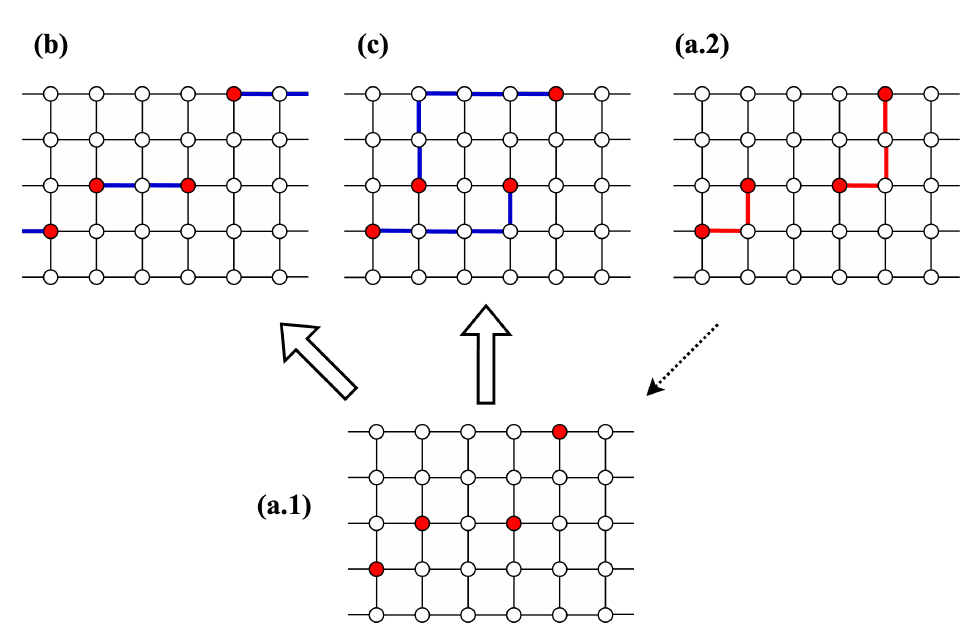}}
\caption{\textbf{Example Surface Code Error Correction.} (a.1) Example syndrome pattern of a surface code, induced by the (a.2) example error pattern. The edges in red represent data qubits with errors, and the vertices in red represent their induced syndromes. (b)(c) Example decoding results from decoders, where (c) is equivalent to (a.2) upon stabilizers, and (b) is not. The edges in blue represent potential erroneous data qubits indicated by decoders.}
\label{fig:decoder}
\end{figure}

\subsection{Error Correction}

Error correction can be performed on surface codes to effectively mitigate the environmental noise. During initialization, each surface code begins with only the data qubits. As we commonly initialize surface codes in their logical $\ket{0}$, all data qubits are commonly prepared in the state $\ket{0}$. Next, we put on the measurement qubits and connect them to their neighboring data qubits using quantum CNOT gates, as depicted in Fig.\ref{fig:sfcode}(b)(c). For the ease of syndrome readout, all measurement qubits are commonly prepared in their $\ket{0}$ states.
After all the CNOT gates are established, we need to perform at least one round of error correction cycle to complete the setup of a surface code. As in Fig.\ref{fig:sfcode}(b)(c), each cycle involves a sequence of CNOT operations and a measurement on each measurement qubit, which forces its neighboring data qubits to be in a mutual eigenstate based on its initial measurement outcome. 
For example, we initialize all data qubits in $\ket{0}$, and via the circuit in Fig.\ref{fig:sfcode}(b), the measurement outcome of each measure-Z qubit is determined to be $\ket{0}$. So we call the state $\ket{0000}$ of qubits abcd to be a valid eigenstate in the $+1$ eigenspace of the measure-Z qubit, for which $+1$ is associated with the $\ket{0}$ measurement outcome. 
However, the initialized state $\ket{0000}$ is not a valid eigenstate for the measure-X qubit as shown in Fig.\ref{fig:sfcode}(c). Thus, the measurement outcome of each measure-X qubit can be either $\ket{0}$ or $\ket{1}$. Yet, once the outcome of each measure-X qubit is determined, its neighboring data qubits are forced to be in one of the possible eigenstates measured in $\ket{+}$'s or $\ket{-}$'s, which are the two basis states along the X-axis of the Bloch sphere.

At this point, we complete the initialization of a surface code, with all data qubits and measurement qubits becoming highly entangled. We name the result of the first round measurement outcomes of all measurement qubits as the quiescent state. 
And if no error occurs in the future, all measurement results in subsequent error correction cycles would stay in the quiescent state. 
If the measurement outcome of any measurement qubit changes, then at least one of its neighboring data qubits has an error, and we mark this measurement qubit as a \textit{syndrome}. In each error correction cycle, all syndromes collectively form a syndrome pattern, as in Fig.~\ref{fig:decoder}(a.1), which is then inputted into an error correction decoder. The decoder aims to connect all syndromes either into pairs or to the boundary, and any data qubit along the connected path will be identified as erroneous. Due to the nature of qubits, the real error pattern, as in Fig.~\ref{fig:decoder}(a.2), cannot be directly observed, since measuring the data qubits would destroy their encoded logical information and collapse them to basis states. 

This decoding problem can be approximated as a minimum weight perfect matching problem, which has been widely studied. Each minimum weight perfect matching decoder connects syndromes with the minimum sum of weights. In our example of Fig.~\ref{fig:decoder}, assuming the weight on each edge is 1, the decoder is likely to output either (a.2) or (b), both of which have a sum of weights being 5. However, these two error patterns differ by a logical operator of the surface code, as the combination of these two patterns traverse across the decoding graph. Meanwhile, the error patterns (a.2) and (c) are equivalent in surface codes, as their combination forms a cycle within the decoding graph. Thus, if the decoder outputs (b), a logical error will arise in this surface code; and if the decoder outputs (a.2) or (c), the errors in this surface codes can be successfully corrected. Other decoding methods, such as Union-Find decoders, tensor network decoders, and machine learning models, have also been proposed, each demonstrating competitive performances in different noise models. Error correction cycles are typically scheduled in a regular basis, and more frequently when the noise level is high.

\section{Network based on Error Correction}

Contemporary error correction decoders suffer from logical errors, which arise when logical operators are implicitly performed by the decoder. For example, if there are only two syndromes observed on a surface code, the decoder have two choices: connecting them into a pair, or separately to two opposite boundaries. These two choices differ by a logical operator, and there is one and only one of the two choices that is equivalent to the real error pattern. Thus, a logical operator would be implicitly applied when the decoder makes the wrong choice.


Theoretically, to prevent logical errors within a surface code, we need to ensure there exists at a ``perfect" qubit on each logical operator that has a $0\%$ physical error rate. This prevents each logical operator from being implicitly applied by the decoder. However, in practice, we can never guarantee a single data qubit to be perfect, but to achieve a similar effect, we can maintain a considerably low error rate for such a qubit. Thus, though we cannot prevent logical errors within a surface code, we are able to reduce the logical error rate of a surface code by selecting a few data qubits and maintaining them with low error rates. 
We define these selected data qubits to be the \textbf{Core part} of each surface code, which is critical to the logical error rate of the surface code; and the remaining data qubits constitute the \textbf{Support part} of each surface code, which is less critical to logical error rate, but is still essential for decoding syndromes. 
The specific selection of data qubits for the Core part depends on a variety of factors such as the required error threshold and the quality of service within networks, which will be discussed in detail in our future work. For the purpose of this chapter, we employ a fixed topology for the Core part, as in Section~\ref{err_decoder}.

\begin{figure*}[t]
\centering
\includegraphics[width= \linewidth]{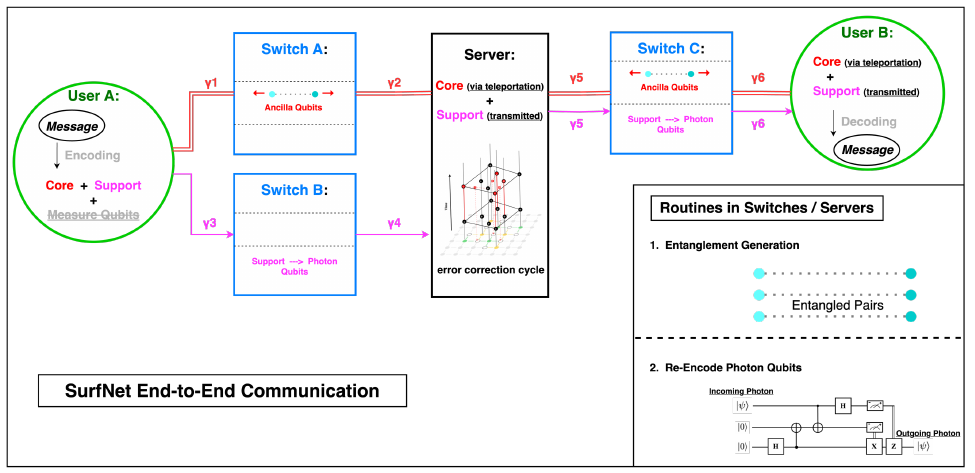}
\caption{\text{Example one-way communication.} A surface code is transferred from user A to B, where the green circles represent Users, blue squares represent Switches, black square represents Server, red double lines represent Optical Fibers for transmitting ancillary entangled pairs, and pink arrows represent Optical Fibers for transmitting the Support parts as photon qubits. Each optical fiber is labeled with its fidelity $\gamma_i\in [0,1]$.}
\label{fig:oneway}
\end{figure*}

SurfNet is based on this finding, and transfer the two parts, the Core and the Support, separately via two communication channels: the entanglement-based channel and the plain channel. In the following, we introduce the components within our quantum network, our proposed one-way communication procedure, and our new error correction decoder.

\subsection{Components}

In this section, we introduce the main components within SurfNet.

\textbf{Users } User nodes generate communication requests in the network, which are to transfer quantum messages from one user to another. 

\textbf{Optical Fibers } Optical fibers serve as edges in quantum networks that interconnect users and switches. Each optical fiber in SurfNet is equipped with two communication channels: an entanglement-based channel and a plain channel.

\textbf{Switches } Switch nodes in SurfNet serve as intermediate stations along each communication path. Each switch operates two routines, corresponding to the two communication channels. One routine continuously generates entangled qubit pairs, which are then shared with adjacent switches through the entanglement-based channels. The other routine re-encodes any incoming Support parts of surface codes from the plain channels into new photons, aiming to reduce the chance of random decoherence.

\textbf{Servers } Server nodes are essentially switch nodes with larger quantum memories. Similar to switches, servers operate the same two routines corresponding to the two communication channels. Additionally, servers are capable of performing error correction when a complete surface code arrives.

\subsection{One-way communication}

In SurfNet, all messages are encoded into surface codes before transmission. Transmitting each surface code involves the transmission of the Core part and the Support part. Notably, each part can be further subdivided into smaller parts to enable efficient parallel transmission. For ease of illustration, here we do not subdivide the Core part and Support part into smaller parts.

Transmission of the Core part is through the entanglement-based channel, which ensures low error rate via techniques of entanglement purification. Once the routing path is scheduled, all switches or servers along the path start to share its generated entangled pairs through the entanglement-based channels with its adjacent nodes along the path. Then, each intermediate node performs entanglement swapping to transform the chain of adjacent entanglements into a single entangled pair shared between the sender node and receiver node. 
This procedure of entanglement generation and swapping needs to be repeated for several times, then entanglement purification is deployed to combine these multiple low-quality entangled pairs into a single pair of high-quality entangled qubits. 
Finally, this entangled pair is consumed in quantum teleportation to transfer the Core part from the sender node to the receiver node.

Meanwhile, data qubits within the Support part are encoded into individual photons and physically transmitted through the plain channels. Since error correction requires the complete surface code to be present in the same server, the Support part(s) normally arrives earlier in the server and keeps being refreshed via error mitigation circuits until the Core part(s) arrives.

During transmission, some qubits may encounter erasure errors. When a qubit is erased, we substitute it with a maximally mixed state, initialized as $\ket{0}$ and randomly subjected to a Pauli gate chosen uniformly from $\{I, X, Y, Z\}$. Thus, on each transmitted surface code, the replaced qubits exhibit a much higher error rate.

Example one-way communication is illustrated in Fig.~\ref{fig:oneway}. Before the error correction in Server, the Core part is teleported to Server by consuming the ancillary entangled pair generated in Switch A and shared through the entanglement-based channel. Meanwhile, the Support part is physically transmitted as photons to Server through the plain channel, bypassing Switch B. After the error correction, both routes bypass Switch C, yet once again, the route for the Core part is through the entanglement-based channel, and the route for the Support part is through the plain channel. It is important to note that the two routes do not necessarily need to take the same path after error correction, and the routes in Fig.~\ref{fig:oneway} are just for illustrative purposes.

\subsection{Error correction decoder}\label{err_decoder}

Before we introduce our SurfNet Decoder, we firstly illustrate how mainstream decoders can be adapted to solve the error corrections in SurfNet. We begin by representing each surface code as a \textit{decoding graph} $G=\{V,E,W\}$, where vertices $V$ are measurement qubits, and edges $E$ are data qubits. Computing the weights $W$ needs some extra work. Notice in SurfNet, data qubits are transferred via different channels and different routes, so these data qubits are with different error rates and should be given different weights. 
The \textit{estimated fidelity} for each data qubit is calculated by multiplying the fidelity of all optical fibers it travels through, denoted as $\rho=\Pi_{i\in \text{path}} \gamma_i$, where $\gamma_i$ is the fidelity of the optical fiber. For a data qubit in the Core part, after each time of entanglement purification, its new $\rho = \frac{\rho_1\rho_2}{\rho_1\rho_2+(1-\rho_1)(1-\rho_2)}$~\cite{li2022fidelity}, for which $\rho_1$ and $\rho_2$ are the estimated fidelity of the consumed entangled pairs. For erasures, the estimated fidelity equals to $0.5$. The weight at each edge is then computed as $w=-\ln{(1-\rho)}$.

\begin{algorithm}[t]
\caption{Modified MWPM Decoder}\label{alg:blossom}
\textbf{Input:} set of Syndromes $\sigma$, estimated data qubit fidelity $\{\rho_i\}$
\\
\textbf{Output:} estimated error pattern $\Gamma$ 

\begin{algorithmic}[1]
\State Construct decoding graph $G=\{V, E, W\}$
\State Create path graph $G'=\{\sigma, E', W'\}$, initialize $G'$ as $\{\sigma, \{\}, \{\}\}$
\For{$u,v \in \sigma$ with $u\ne v$}
    \State Find $P(u,v)$, the shortest path from $u$ to $v$ in $G$ 
    \State Compute $w'(u,v)=\sum_{e \in P(u,v)} w_e$ 
    \State Add $(u,v)$ and $w'(u,v)$ to $E'$ and $W'$ of $G'$
\EndFor
\State Apply Blossom to $G'$ and get the minimum weight perfect matching $M'$
\State Initialize $\Gamma$ as empty
\For{$(u,v)$ in $M'$}
    \State add the edges of $P(u,v)$ to $\Gamma$
\EndFor
\end{algorithmic}
\end{algorithm}

\begin{theorem}
\label{the_mwpm}
Error corrections in SurfNet can be solved using a minimum weight perfect matching (MWPM) algorithm.
\end{theorem}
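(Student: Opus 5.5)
The approach is to reduce SurfNet error correction to the standard maximum-likelihood matching formulation, with weights determined by the per-qubit estimated fidelities. I treat the X- and Z-type decoding problems separately; $Y$ errors contribute to both, and erasures are modeled as fidelity $\rho=0.5$. For one type, the decoding graph $G=\{V,E,W\}$ has measurement qubits as vertices and data qubits as edges. Boundary data qubits touch only one measurement qubit, so I would add virtual boundary vertices so that every edge has two endpoints.

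First, I would model the errors as independent across data qubits, which is justified because each data qubit's error comes from the fibers and purification along its own route. Data qubit $e$ is then erroneous with probability $p_e=1-\rho_e$. For an error set $\Gamma\subseteq E$,
\[
\Pr(\Gamma)=\prod_{e\in\Gamma}p_e\prod_{e\notin\Gamma}(1-p_e)=\prod_{e\in E}(1-p_e)\prod_{e\in\Gamma}\frac{p_e}{1-p_e}.
\]
Since the first product is a constant, maximizing $\Pr(\Gamma)$ over all $\Gamma$ consistent with the syndrome $\sigma$ is the same as minimizing $\sum_{e\in\Gamma}w_e$ with $w_e=\ln\frac{1-p_e}{p_e}$. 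I would note that this differs slightly from the paper's $-\ln(1-\rho)=-\ln p_e$, and argue that this is harmless in the low-error regime where $1-p_e\approx 1$. Alternatively, I would simply adopt the log-likelihood-ratio weight. Either way, every weight is nonnegative because $p_e\le 1/2$, which holds even for erasures.

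Second, I would characterize the consistent error sets. A set $\Gamma$ produces syndrome $\sigma$ exactly when its boundary, taken mod 2 with boundary vertices allowed as free endpoints, equals $\sigma$. Such a set decomposes into edge-disjoint paths pairing syndromes with each other or with the boundary, plus possibly some cycles. Cycles only add nonnegative weight, so a minimum-weight $\Gamma$ can be taken to be a union of paths. Each path can then be replaced by a shortest path between its endpoints without increasing the weight.

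Third, I would build the path graph $G'$ of Algorithm~\ref{alg:blossom}. Its vertices are the syndromes, and each pair $(u,v)$ gets edge weight $w'(u,v)$ equal to its shortest-path distance in $G$. To handle boundary matches, I would add a boundary copy $b_u$ for each syndrome $u$, with edge weight equal to $u$'s distance to the boundary, and connect all the $b$'s pairwise with weight 0; this keeps a perfect matching feasible. Any consistent $\Gamma$ of weight $W$ then induces a perfect matching in $G'$ of weight at most $W$. Conversely, every perfect matching $M'$ induces the union of its paths. When paths overlap, taking the symmetric difference only lowers the weight and still yields a consistent error set. So the minimum-weight perfect matching value equals the minimum weight of a consistent error set, and Blossom finds it in polynomial time.

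The main obstacle is this last equivalence: showing that overlaps and the boundary construction introduce no gap, and reconciling the paper's weight formula with the exact likelihood. Finally, I would remark that choosing the most likely error set, rather than the most likely error equivalence class, is the standard MWPM approximation; so "solved" should be read in that sense.
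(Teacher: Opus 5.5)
Your argument is correct and has the same skeleton as the paper's proof: the shortest-path graph $G'$ on the syndromes from Algorithm~\ref{alg:blossom}, Blossom on $G'$, and the caveat that MWPM only approximates maximum-likelihood decoding. The difference is that the paper only describes this construction and delegates its correctness, including the boundary case, to the results of Fowler and of Korte and Vygen. You instead prove it: you derive the weights from the likelihood, use the paths-plus-cycles ($T$-join) decomposition, add the boundary copies $b_u$ (without them Algorithm~\ref{alg:blossom} has no perfect matching whenever $|\sigma|$ is odd), and resolve overlapping paths by symmetric difference. Your route is self-contained and pins down exactly what is optimized. The paper's route is shorter and rests on established results, but it never ties its weight $-\ln(1-\rho)$ to a likelihood.

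Two details need tightening. First, your fallback of keeping the paper's weight $-\ln p_e$ on the grounds that $1-p_e\approx 1$ fails exactly at erasures. There $p_e=1/2$, so the paper's weight is $\ln 2$ while the correct value is $0$; you should commit to $w_e=\ln\frac{1-p_e}{p_e}$. Second, $p_e\le 1/2$ does not follow from the paper's model $\rho=\prod_i\gamma_i$: two fibers of fidelity $0.6$ already give $\rho=0.36$. So either cap $p_e$ at $1/2$, or, before your nonnegativity steps, move every negative-weight edge into the correction, toggle the syndromes at its endpoints, and decode with $|w_e|$.
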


\begin{proof}
Following the above construction, let $G=\{V,E,W\}$ be an arbitrary decoding graph, and $\sigma\subseteq V$ be the set of syndromes. As mentioned previously, the error correction decoder aims to connect all syndromes into pairs, and each syndrome should be connected exactly once. Each connected path made by the decoder is associated with a \textit{likelihood}, which correlates with the fidelity of each data qubit along the connected path. And the performance of the decoder is evaluated by summing the likelihoods of all connected paths it made: the higher the likelihood is, the better the decoder performs. 
This decoding problem can be efficiently formulated as the minimum weight perfect matching problem. A \textit{perfect matching} $M$ of graph $G$ is defined as $M\subseteq E$ such that each vertex in $V$ is incident to one and only one edge in $M$. Notice in the decoding graph $G$, there exist vertices $V\setminus \sigma$ that are without syndromes and should not be connected. Thus, before we apply the minimum weight perfect matching algorithm, we need to construct a path graph $G'$. 
As in Algorithm~\ref{alg:blossom}, syndromes $\sigma$ constitute the vertices in $G'$, and they are interconnected via the shortest paths found in the original graph $G$. These paths are added as edges into $G'$. Then a minimum weight perfect matching algorithm (for example, as in Algorithm~\ref{alg:blossom}, the blossom algorithm~\cite{edmonds1973matching}) is applied to $G'$ to get the minimum weight perfect matching $M'$. The requirement of minimum weight ensures that such $M'$ represents the decoding result with a high likelihood. Notably, this result is not guaranteed to be with the maximal likelihood, since the minimum weight perfect matching formulation serves as an approximation to the actual decoding problem. The proof on the correctness of this approximation is accomplished as a joint result from~\cite{fowler2013minimum} and~\cite{korte2011combinatorial}, which handles the error correction for surface codes in two cases: surface codes without boundaries, and surface codes with boundaries.
\end{proof}

\begin{corollary}
\label{the_blo}
Error corrections in SurfNet can be solved via the sparse blossom algorithm in $O(n^2)$.
\end{corollary}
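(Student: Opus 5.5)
The plan is to reduce the corollary to Theorem~\ref{the_mwpm} and then account for the cost of each stage of Algorithm~\ref{alg:blossom} once the exact blossom step is replaced by sparse blossom~\cite{higgott2023sparse}. Let $n$ be the number of data qubits of the surface code, i.e.\ $|E|=n$. For a distance-$d$ planar code $|V|=\Theta(d^2)=\Theta(n)$, and the decoding graph $G$ has bounded degree (each data qubit touches at most two measure-X and two measure-Z qubits). By Theorem~\ref{the_mwpm}, correction in SurfNet is exactly an MWPM instance on the path graph $G'$ built from $G$ with weights $w=-\ln(1-\rho)$. The only SurfNet-specific feature is that the weights are non-uniform. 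The first step is therefore to note that $\rho\in[0.5,1)$ (erasures give $0.5$, the purification update keeps values in $[0,1]$), so every weight is finite and nonnegative. That is precisely the input class sparse blossom accepts: a sparse graph with nonnegative edge weights, together with a virtual boundary vertex for boundary matches.

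The second step replaces lines 2--8 of Algorithm~\ref{alg:blossom}. Instead of materialising $G'$ explicitly, which costs $|\sigma|$ Dijkstra runs plus a dense blossom on $|\sigma|^2$ edges (up to $O(n^3)$ or worse), sparse blossom grows regions directly on $G$. Its output matching is, by the correctness argument of~\cite{higgott2023sparse}, a minimum weight perfect matching of the implicit complete graph whose edge weights are the shortest-path distances. This is exactly $M'$, so the decoding result coincides with the one Theorem~\ref{the_mwpm} justifies.

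The third step is the complexity bound, and I expect it to be the main obstacle, because~\cite{higgott2023sparse} reports near-linear \emph{empirical} scaling rather than a clean worst-case bound. I would argue crudely, as follows.
\begin{itemize}
\item Sparse blossom processes events in a priority queue.
\item There are at most $O(|\sigma|)=O(n)$ augmentations or blossom formations in total.
\item Between consecutive augmentations, each region grows or shrinks across each edge of $G$ at most $O(1)$ times, giving $O(|E|+|V|)=O(n)$ events, each costing $O(\log n)$ in the queue.
\end{itemize}
This yields $O(n^2\log n)$. To remove the logarithm I would exploit the bounded degree and use a bucketed (radix-style) queue. This is legitimate if the weights are discretised to integers in a bounded range, as sparse blossom itself does in practice. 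Each event then costs amortised $O(1)$, giving $O(n)$ work per augmentation phase, and hence $O(n^2)$ over $O(n)$ phases.

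The final step is to read off the answer. Reconstructing the error pattern $\Gamma$ (lines 9--12) means tracing each matched pair back along region boundaries, which costs $O(n)$ in total. This is dominated by the matching, so the whole correction runs in $O(n^2)$.
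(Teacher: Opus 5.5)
\paragraph{Gap in step 3.} Your per-phase bound is asserted rather than derived: $O(n)$ events per augmentation phase, each costing $O(1)$ amortized. It is exactly the worst-case analysis that \cite{higgott2023sparse} does not supply. The paper itself notes, right after this corollary, that the worst case of sparse blossom is $O(nq^3+mq^2)$, with $q=|\sigma|$ syndromes. For $q=\Theta(n)$ that is an $O(n^4)$ bound, not $O(n^2)$. Three specific problems:
\begin{enumerate}
\item Your claim that each region crosses each edge $O(1)$ times per phase gives $O(|\sigma|\,n)$ events per phase, not $O(n)$, because there are $\Theta(|\sigma|)$ regions and blossoms. To get $O(n)$ you would need each edge to be crossed $O(1)$ times in total over all regions. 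That requires an Edmonds-style label-monotonicity argument, which you do not adapt to sparse blossom's concurrently growing trees.
\item Blossom formation and shattering are not edge-crossing events. In sparse blossom their cost is proportional to the whole area of the nested region, since every node in the descendant shells must be re-associated with the new top-level region. Nested formations of depth up to $\Theta(|\sigma|)$ can occur within one phase, so nothing in your count rules out a phase costing $\Theta(n|\sigma|)$.
\item The claim of $O(|\sigma|)$ blossom formations in total is not right; the standard bound is $O(|\sigma|)$ per phase.
\end{enumerate}
The bucket queue also needs the weights rounded to integers in a bounded range. After rounding, the algorithm returns a minimum-weight matching for the rounded weights, not the matching $M'$ for $w=-\ln(1-\rho)$ that your second step promised. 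The resulting bound also depends on that range.

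\paragraph{How the paper proves it.} The paper never attempts a worst-case bound for the matching. Its proof has three parts: Theorem~\ref{the_mwpm}; the fact that sparse blossom solves the problem via the minimum-weight embedded matching, in the \emph{observed} $O(n^{1.32})$ time reported in \cite{higgott2023sparse}; and the SurfNet-specific step of updating the decoding graph with the computed fidelities. That update costs $O(n^2)$ and dominates. So the paper's $n^2$ comes from constructing the weights, a step your proposal leaves out entirely. To repair your proof, either adopt that reading (observed matching cost plus an explicit $O(n^2)$ accounting for the weight update), or, if you want a genuine worst-case statement, replace sparse blossom with a matching algorithm that has a proven bound and analyze it.

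\paragraph{A minor point.} The claim $\rho\in[0.5,1)$ is false: $\rho$ is a product of fibre fidelities and can drop well below $0.5$. Nonnegativity of $w=-\ln(1-\rho)$ holds for every $\rho\in[0,1)$ anyway. The case $\rho=1$ gives infinite weight and simply means deleting the edge.
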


\begin{proof}
The proof follows as a direct result from Theorem~\ref{the_mwpm}, since the sparse blossom algorithm~\cite{higgott2023sparse} is a solver for the minimum weight perfect matching problem. Notably, different from other conventional minimum weight perfect matching decoders, the sparse blossom algorithm solves for the \textit{embedded matching} $M_E$, which is defined as $M_E\subseteq E$ such that each vertex in $\sigma$ is is incident to an odd number of edges in $M_E$, and each vertex in $V\setminus \sigma$ is incident to an even number of edges in $M_E$. The sparse blossom algorithm aims to find a minimum weight embedding matching of the decoding graph, and this problem proves in~\cite{higgott2023sparse} to be a reduction from the original minimum weight perfect matching problem. As shown in~\cite{higgott2023sparse}, the sparse blossom algorithm outputs results with an observed complexity of $O(n^{1.32})$. And updating the decoding graph with our calculated fidelity requires $O(n^2)$. Thus, the total time for error corrections in SurfNet using the sparse blossom algorithm is $O(n^2)$.
\end{proof}

So far, we have proven that conventional minimum weight perfect matching algorithms can be modified for solving the error corrections in SurfNet, achieving a $O(n^2)$ running time. In addition, as noted in~\cite{higgott2023sparse}, employing the sparse blossom algorithm in our Algorithm~\ref{alg:blossom} can be adapted for a parallel implementation, with a theoretical amortized complexity of linear time. However, employing the sparse blossom algorithm also suffers a worst time complexity of $O(nq^3+mq^2)$, in which $n$ is the number of vertices in the graph, $m$ is the number of edges, and $q$ is the number of syndromes.
\\

\begin{algorithm}[t]
\caption{SurfNet Decoder}\label{alg:ufdecoder}
\textbf{Input:} set of Syndromes $\sigma$, set of Erasures $\delta$, estimated data
\\ \hspace*{\algorithmicindent}\hspace*{\algorithmicindent} 
qubit fidelity $\{\rho_i\}$, decoder step size $r=2/3$
\\
\textbf{Output:} estimated error pattern $\Gamma$ 

\begin{algorithmic}[1]
\State Create list of odd clusters $\{C_1,C_2,\ldots\}$ that each contains a single syndrome
\While{exists odd cluster $C_i$}
    \For{all odd Clusters $\{C_i\}$}
        \State \textbf{Grow} $C_i$ by corresponding \textbf{speed}:
            \State \textbf{ } $-r/\ln{(1-0.5)}$ edge if on $\delta$
            \State \textbf{ } $-r/\ln{(1-\rho_i)}$ edge if else
        \State If $C_i$ meets another cluster, fuse together
        \State If $C_i$ becomes even, remove from odd list
    \EndFor
\EndWhile
\State Find a spanning tree for each cluster
\State Apply peeling decoder to the spanning forest to find $\Gamma$
\end{algorithmic}
\end{algorithm}

\begin{figure*}[t]
\centering
\fbox{\includegraphics[width= \linewidth]{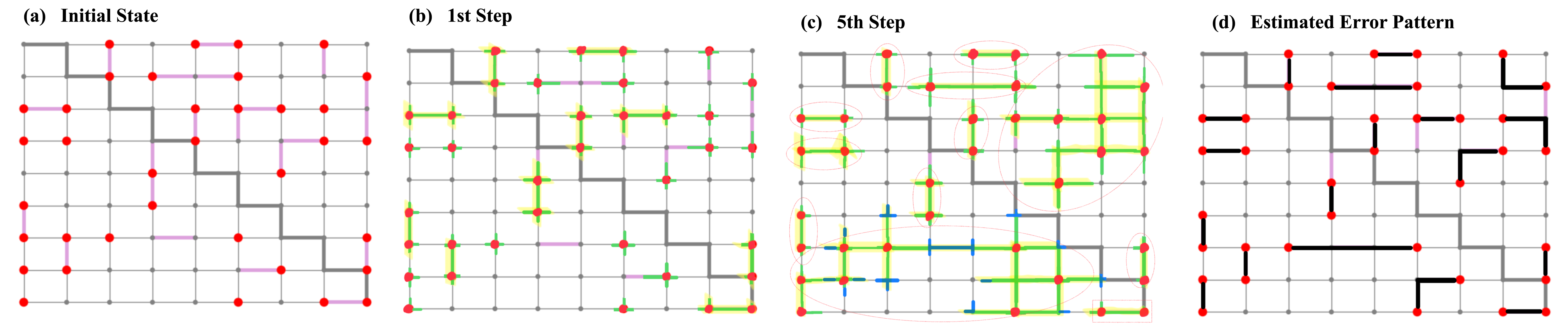}}
\caption{\text{Example decoding graph of the SurfNet Decoder}. Each vertex is a measurement qubit and each edge is a data qubit. For illustration, the growth speeds on Erasures, Core, Support are $\{1/2,1/8,1/4\}$. \textbf{(a)} \text{Initial State}. Red vertices denote Syndromes, Purple lines denote Erasures, and Solid Grey lines denote the Core part. \textbf{(b)} \text{Growth Step 1}. Green lines denote newly added edges for each cluster. Even clusters are marked in Yellow. \textbf{(c)} \text{Growth Step 5}. Blue lines denote newly added edges for each cluster, and Green lines denote existing edges. After this step, all clusters become even and finish growing. \textbf{(d)} \text{Estimated Error Pattern}, where Black lines denote data qubits with errors.}
\label{fig:ufdecoder}
\end{figure*}

Thus, it motivates us to also design a new decoder that works better in the worst case scenarios, which is our SurfNet Decoder. 
Overall, the SurfNet Decoder is based on the definition of ``growth speed" of clusters. Its routine is adapted from the Union-Find Decoder~\cite{delfosse2021almost} with the peeling decoder~\cite{delfosse2020linear}, and is tailored explicitly for our dual-channel network architecture. The algorithm is outlined in Algorithm~\ref{alg:ufdecoder}, with an illustrated procedure in Fig.~\ref{fig:ufdecoder}.
Among the inputs to the SurfNet Decoder, syndromes and erasures are derived from the measurement outcomes, and the estimated fidelity is computed as previously described. The decoder step size can be further adjusted to optimize between the decoding speed and accuracy, with the default $2/3$ generally achieving a good balance.
During the decoding, the initial step involves creating a singleton set for each syndrome vertex, denoted as a \textit{cluster}. Additionally, we define odd clusters to be clusters containing an odd number of syndromes, and even clusters to be clusters containing an even number of syndromes. In this step, each syndrome is an odd cluster and its cluster solely contains itself.
Next we grow the exteriors of each cluster at different speeds, calculated as $-r/\ln{(1-\rho_i)}$, where $r$ is the decoder step size and $(1-\rho_i)$ is the probability of errors occurring at the corresponding edge. The speed is maximized at erasures, as the replaced qubits with maximally mixed states exhibit the highest error rate.
When two expanded edges touch or one edge reaches the exterior of another cluster, the two clusters fuse together. If the combined cluster becomes an even cluster, we remove it from the list of odd clusters, as it no longer needs to grow.


\begin{theorem}
\label{the_surf}
Error corrections in SurfNet can be solved via the SurfNet Decoder with a worst case complexity of $O(n\alpha(n))$, where $\alpha$ is the inverse Ackermann function.
\end{theorem}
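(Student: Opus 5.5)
The plan is to reduce Theorem~\ref{the_surf} to the known analysis of the Union-Find decoder~\cite{delfosse2021almost} together with the linear-time peeling decoder~\cite{delfosse2020linear}. We argue that weighted growth speeds change only constants, not the asymptotic count of union-find operations. Throughout, let $n$ be the number of vertices of the decoding graph $G=\{V,E,W\}$. Since $G$ comes from a planar surface code lattice, every vertex has degree at most four, so $|E|=O(n)$.

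\textbf{Growth phase.} First, discretize growth. Every speed $-r/\ln(1-\rho_i)$ is a fixed positive rational multiple determined by the fidelities. The fidelity classes are erasures, Core, and Support, each fixed by the route. We bound the ratio between the largest and smallest speed by a constant $K$ that does not depend on $n$. Then each edge is ``half-grown'' and ``fully grown'' after at most $O(K)=O(1)$ growth increments from each endpoint. Each edge support counter is therefore updated $O(1)$ times over the whole run. We store clusters in a union-find structure with union by rank and path compression. Each cluster keeps its parity, its size, and a boundary list of vertices with partially grown incident edges. A growth round of an odd cluster touches only its boundary edges. When an edge becomes fully grown, we call Find on both endpoints and Union if they differ, and merge boundary lists smaller-into-larger as in~\cite{delfosse2021almost}. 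The total number of edge-increment events is $O(|E|)=O(n)$. Each event triggers $O(1)$ Find/Union calls, so Tarjan's bound gives $O(n\alpha(n))$ for this phase. Boundary-list bookkeeping is amortized against edge growth events, again $O(n)$. Parity updates are $O(1)$ per Union. The while-loop ends because each round fully grows at least one edge of some odd cluster, and an odd cluster can always grow until it reaches another cluster or the boundary.

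\textbf{Peeling phase.} Next, compute a spanning forest of the grown clusters by BFS/DFS over the fully grown edges, in $O(n)$. Then apply the peeling decoder of~\cite{delfosse2020linear}, which processes leaves of the forest in $O(n)$ and returns an error pattern $\Gamma$ consistent with $\sigma$. Consistency follows because every final cluster is even or touches the boundary. Summing the phases gives the worst case $O(n\alpha(n))$.

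\textbf{Main obstacle.} The hardest step is justifying that the non-uniform speeds do not blow up the number of growth rounds, and hence the number of boundary scans of odd clusters that never fuse. I would first try to show that the round count per edge is bounded by $\lceil \max_i s_i / \min_i s_i\rceil$, where the $s_i$ are the growth speeds. This quantity is constant because the fidelities are bounded away from $1$ and erasures use $\rho=0.5$. If that bound is too coarse, I would fall back to an event-driven implementation. It keeps a priority queue over only the constantly many speed classes, which is effectively a bucket queue, and it advances directly to the next edge-completion time. Every operation is then charged to an edge event rather than to a round.
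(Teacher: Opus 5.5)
Your complexity argument follows the same route as the second paragraph of the paper's proof: the union-find implementation of~\cite{delfosse2021almost} with per-cluster boundary lists, followed by the linear-time peeling decoder of~\cite{delfosse2020linear}. You carry it out in much more detail than the paper does. You also isolate something the paper never states. The cited analysis is for uniform half-edge growth, and with speeds $s_e=-r/\ln(1-\rho_e)$ the round-based growth stays within $O(n\alpha(n))$ only because every edge receives $O(1/s_{\min})$ increments. That in turn holds only if the fidelities are bounded away from $1$ independently of $n$; otherwise a single slow edge forces arbitrarily many rounds.

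Your ``main obstacle'' is in fact already resolved by your main line. Charge each visit to a boundary vertex to an increment of one of its not-yet-full incident edges; such an edge exists by the definition of the boundary. The total growth work is then $O(|E|/s_{\min})$, so the bucket-queue fallback is unnecessary.

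Two small repairs are needed. First, argue termination by this same count: each round increments some edge by at least $s_{\min}$, and total support is at most $|E|$. The claim that ``each round fully grows at least one edge'' is false even for ordinary half-edge growth. Second, make the boundary-list merge an $O(1)$ splice of linked lists. An element-wise smaller-into-larger copy would cost $O(n\log n)$.

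What your proposal omits is the first half of the paper's proof. The paper reads ``solved'' as in Theorem~\ref{the_mwpm}, that is, as returning a (near-)maximum-likelihood correction. It argues this by claiming that each final cluster is a constant-radius ball in the $-\ln(1-\rho)$ metric, so any competing pattern $\Gamma'$ must leave the clusters through less likely edges. You establish only that $\Gamma$ is consistent with $\sigma$, so under the paper's reading that component is missing.

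The paper's argument is heuristic, however, and cannot simply be imported. Fused clusters are not constant-radius balls. Moreover, the peeling output depends on an arbitrary spanning tree, so two corrections supported inside the same clusters can already differ in likelihood. The cleanest fix is to state explicitly that ``solved'' means producing a syndrome-consistent correction, which is exactly what you prove.
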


\begin{proof}
First we prove the near-maximal likelihood of the output $\Gamma$ from the SurfNet Decoder. Let $\Gamma'$ be arbitrary error pattern that is with a different likelihood from $\Gamma$. Denote $\{C_1,C_2,\ldots,C_k\}$ as the set of even clusters at the end of the SurfNet Decoder. As the optimality of the peeling decoder is proven in~\cite{delfosse2020linear}, then there exists at least one path in $\Gamma'$ that travels from one even cluster to another, which spans extra edges that were not grown by any clusters. Also by Algorithm~\ref{alg:ufdecoder}, each cluster suffices to be a shape with constant radius, with each point on its boundary reachable by the same likelihood, and any exterior edges can only be reached with a lower likelihood, which determines $\Gamma'$ to be with a lower likelihood than the algorithm output $\Gamma$. 
The worst time complexity of $O(n\alpha(n))$ is guaranteed by deploying an union-find implementation as in~\cite{delfosse2021almost}. In this implementation, each cluster is recorded as a tree data structure and its boundary vertices are recorded in a separate list. The boundary lists are updated at each round of growth, and cluster fusion is accomplished using the union-find algorithm. Several subroutines for tracing the states of each cluster are also employed, and the implementation details can be found in~\cite{delfosse2021almost}.
\end{proof}
    \textbf{}

\section{Network Routing}

Given the limited capacity and low entanglement generation rates at each switch and server, we aim to design a reliable and efficient network that achieves a high network throughput while maintaining a high successful rate for each communication. 
And we make the following assumptions:

\begin{itemize}
\item The network is bidirectional and connected, ensuring the existence of at least one path from one user to another.
\item The fidelity of each optical fiber can be measured and remains constant during routing.
\item Switch nodes, server nodes, and optical fibers have finite capacities that are not excessively large.
\end{itemize}
The routing procedure consists of two stages: scheduling and execution. 
To ensure simplicity and efficiency, we adopt offline scheduling to allocate available resources. And we employ online execution to promptly recover routing paths from errors or failures, minimizing the waste of utilized resources. 

\subsection{Offline scheduling}

First, we introduce the offline scheduling in SurfNet. Before each round of routing, the routing protocol collects requests from user nodes and operating information from switches, including their current capacity and entanglement status. For each request, the Support part consumes capacity in switches, and the Core part additionally consumes the prepared entangled pairs. The routing problem is then formulated as an optimization problem with capacity and entanglement resources acting as constraints. To provide a formal mathematical representation of our routing protocol, we define key terms as outlined in Table~\ref{table:notation_3}.
As listed in its last column, the variables in our formulation consist of four sets of integer variables: $Y_k$ to decide whether each request will be scheduled or partially scheduled, $a_e^k$ and $b_e^k$ to decide routes for each request, and $x_r^k$ to decide the number of scheduled error corrections for each request.

Computing fidelity for surface codes involves multiple exponentiations and multiplications, leading to a non-linear problem formulation. Thus, we translate the fidelity at each edge as its \textit{noise}. The noise at each edge is calculated as $\mu = \log(1/\gamma)$, where $\gamma$ is its fidelity measured in the range $[0,1]$. This allows noise to accumulate as summations instead of multiplications, and lower values are better. Correspondingly, we model the effect of each error correction as a decrease in the current noise to the surface code. And when computing the total noise to a surface code, we separately sum the noise for the Core and Support parts, to account for their different influences in logical error rates. 
For example, if we transmit a surface code of 25 data qubits, with 7 data qubits in the Core part, through the route in Fig.~\ref{fig:oneway}, then its two expected noises are: 
\begin{equation*}
\begin{aligned}
& \frac{7}{7}(\mu_1+\mu_2+ \mu_5+\mu_6)- \omega   \\
& \frac{7}{25}*\frac{1}{2}(\mu_1+\mu_2+ \mu_5+\mu_6)+\frac{18}{25}(\mu_3+\mu_4+ \mu_5+\mu_6) - \omega \\
\end{aligned}\end{equation*}
correspondingly for the Core part and the entire surface code. 
Notice in the second equation, when summing the noise for the entire surface code, the noise for the Core part is halved to account for the effect of entanglement purification through the entanglement-based channel. As later shown in Eq.~\ref{eq:c7}, we require both calculated noises to be below their corresponding thresholds to ensure the high fidelity of each communication.

Our integer programming for the routing problem states as follows. The objective function is to sum the number of concurrent communications, with the goal of maximizing the total throughput in our network:

\begin{equation}
\max \sum_{k\in \mathbb{K}} Y_k
\end{equation}
with the following constraints of Eq.~[\ref{eq:c1_3}]-[\ref{eq:c7}]:

\begin{equation}
\label{eq:c1_3}
\begin{aligned}
&\ Y_k\in [0,i_k] \textbf{ , } x^k_r\in [0,i_k] \textbf{ , } a^k_e\ge 0 \textbf{ , } b^k_e\ge 0 \\[1.5ex]
\end{aligned}
\end{equation}
which sets up the feasible regions of variables.

\begin{equation}
\begin{aligned}
&\ \sum_{e\in E} a^k_{(d,j)}+b^k_{(d,j)}+a^k_{(i,s)}+b^k_{(i,s)} = 0 & \forall_{k=(s,d)\textbf{ }\in \mathbb{K}} &\\ 
&\ \sum_{(i,d)\in E} a^k_{(i,d)}= \sum_{(s,j)\in E} a^k_{(s,j)} = n*Y_k & \forall_{k=(s,d)\textbf{ }\in \mathbb{K}} &\\
&\ \sum_{(i,d)\in E} b^k_{(i,d)}= \sum_{(s,j)\in E} b^k_{(s,j)} = m*Y_k & \forall_{k=(s,d)\textbf{ }\in \mathbb{K}} &\\
\end{aligned}
\end{equation}
which are the Initialization and Termination constraints that apply to the sender nodes and destination nodes. 

\begin{equation}
\begin{aligned}
&\ \frac{1}{n}\sum_{(i,r)\in E} a^k_{(i,r)} = \frac{1}{m}\sum_{(i,r)\in E} b^k_{(i,r)} = x^k_r & \forall_{ r\in\mathbb{R}\mathbb{R}, k\in \mathbb{K}} &\\
&\ \sum_{(i,r)\in E} a^k_{(i,r)} = \sum_{(r,j)\in E} a^k_{(r,j)} & \forall_{ r\in\mathbb{R}, k\in \mathbb{K}} &\\
&\ \sum_{(i,r)\in E} b^k_{(i,r)} = \sum_{(r,j)\in E} b^k_{(r,j)} & \forall_{ r\in\mathbb{R}, k\in \mathbb{K}} &\\
\end{aligned}
\end{equation}
which are the Conservation constraints that apply to the switch and server nodes.

\begin{equation}
\begin{aligned}
&\ \sum_{k\in\mathbb{K}} \sum_{(i,r)\in E} a^k_{(i,r)}+b^k_{(i,r)} \le \eta_r & \forall_{r\in\mathbb{R}} &\\
&\ \sum_{k\in\mathbb{K}} a^k_{(i,j)} + \sum_{k\in\mathbb{K}} a^k_{(j,i)} \le \eta_e & \forall_{e=(i,j)\in\mathbb{E}} &\\
\end{aligned}
\end{equation}
which are the Capacity constraints that ensure the computing resources within the network cannot be over-consumed. 
The first set of constraints sum the number of data qubits scheduled to be stored in each switch and server, and ensure they do not exceed the quantum memory capacity at each location. 
The second set of constraints sum the number of teleportation scheduled across each optical fiber, and ensure they do not exceed the number of prepared entangled pairs in the corresponding optical fiber.

\begin{table}[t]
\begin{center}
\caption{\label{table:notation_3}TABLE OF NOTATIONS USED IN ROUTING FORMULATION}
\begin{tabular}{|| l l ||} 
\hline\hline
 Notation & Definition \\ 
\hline\hline
 $\mathbb{U}$ & The set of all Users \\
 $\RR$ & The set of all Switches (including Servers) \\
 $\RR\RR$ & The set of all Servers \\
 $E$ & The set of all Edges \\
\hline
 $\eta_r$ & Storage capacity in each switch $r\in \RR$ \\
 $\eta_e$ & Number of prepared entanglements across each edge $e\in E$ \\
 $\mu_e $ & Noise at each edge $e$\\
\hline
 $\mathbb{K}$ & The set of all Communication Requests $k=[(s_k,d_k),i_k]$\\
 $i_k$ & Number of Surface Codes in Request $k$\\ 
 $n$ & Number of Core data qubits in each surface code\\
 $m$ & Number of Support data qubits in each surface code\\
\hline
 $\omega$ & Noise Reduction for performing error correction in a server \\
 $W_c$ & Noise Threshold for the Core part \\
 $W$ & Noise Threshold for the entire surface code\\
\hline
 $Y_k$ & Integer variable of determining Execution of request $k$\\
 $a^k_e$ & Integer variable of determining number of Core qubits for \\
 & request $k$ travelling through edge $e$ \\ 
 $b^k_e$ & Integer variable of determining number of Support qubits for \\
 & request $k$ travelling through edge $e$ \\ 
 $x^k_r$ & Integer variable of determining number of Error Corrections\\
 & scheduled for request $k$ in each server $r$\\
 \hline
\end{tabular}
\end{center}
\end{table}

\begin{equation}
\label{eq:c7}
\begin{aligned}
&\ 0\le \sum_{e\in E}(\mu_e*a^k_e) - \omega\sum_{r\in \mathbb{R}\mathbb{R}} x^k_r \le W_c*Y_k & \forall_{k\in \mathbb{K}} &\\
&\ \sum_{e\in E} [\mu_e*(\frac{1}{2} a^k_e+ b^k_e)] - \omega\sum_{r\in \mathbb{R}\mathbb{R}} x^k_r \le W*Y_k & \forall_{k\in \mathbb{K}} &\\
\end{aligned}
\end{equation}
which are the Noise constraints that ensure high-fidelity communication, computed as summing the noise along the routing path and subtracting the number of error corrections performed. 
The first set of constraints sum the noise only for the Core parts, to account for their critical influences in logical error rates of surface codes. The summation is required to be below the threshold $W_c$ to prevent excessive noise accumulation that can drift up the logical error rates. We also force this summation to be above 0 to prevent the inclusion of consecutive servers in a single route to conserve the computing resources.
The second set of constraints sum the noise for the entire surface code, and is required to be below the threshold $W$. 
As demonstrated later in Section~\ref{result}, fine-tuning the two thresholds $W_c$ and $W$ allows for balancing communication fidelity and network throughput.

Note that $\eta_r,\eta_e,\mu_e,i_k$ are collected from the network, and $n,m,\omega,W_c,W$ are pre-defined parameters, so the above constraints remain linear. 
However, as some of these variables are integers, the above integer programming formulation does not guarantee a polynomial-time solution. Feasible solutions can be adapted from existing algorithms for solving the integer minimum-cost Multi-Commodity Network Flow (MCNF) problems~\cite{salimifard2022multicommodity} using the Dantzig-Wolfe decomposition principle as in~\cite{hu1963multi, tomlin1966minimum}. For simplicity, in this chapter, we employ a relaxed Linear Programming version with rounding in the Evaluation section to simulate the performance of SurfNet.

\begin{figure*}[t]
\centering
\includegraphics[width= .9\linewidth]{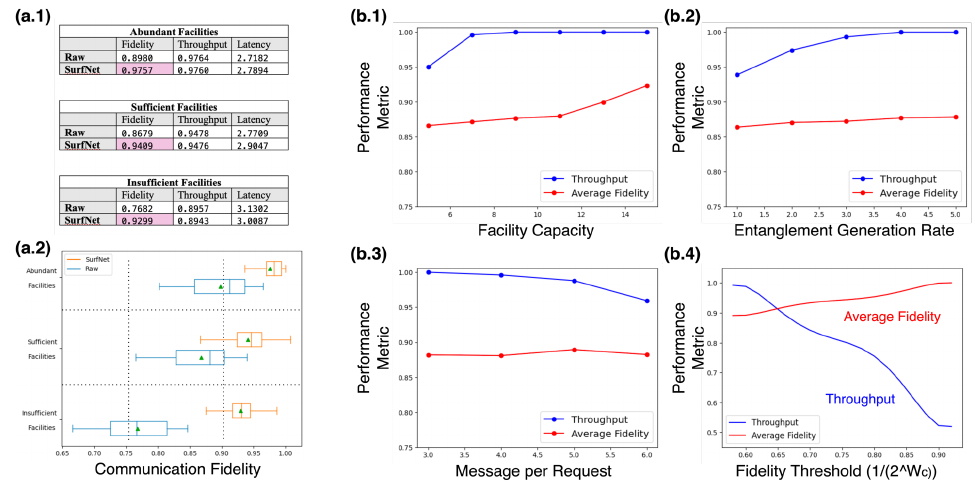}
\caption{\textbf{(a)} Comparison between Raw and SurfNet in different network scenarios. In each scenario, the comparison is over three evaluation metrics shown in table, and the comparison over communication fidelity is detailed in the plots. \textbf{(b)} Performance of SurfNet with respect to different \textbf{(b.1-3)} network parameters and \textbf{(b.4)} routing parameter. The performance is evaluated using two metrics: fidelity and throughput.}
\label{fig:eval_r_3}
\end{figure*}

\subsection{Online execution}

Next, we introduce the online execution in SurfNet. Upon receiving the routing schedules from the routing protocol, users directly transfer their Support parts of surface codes to the designated next node, while retaining the Core parts until reliable entanglements are established. Switches and servers operate two simultaneous routines: preparing entanglements for the Core parts and directly transferring the Support parts. In case an error correction is scheduled in a server, the server initiates a third routine: preparing error correction circuits and awaits the arrival of a complete surface code.

Common challenges arise from unpredictable situations like transmission delays or crashes in incoming/outgoing ports. And these issues are effectively addressed in our network. Each qubit has the option to remain in its current node and can be preserved in error mitigation circuits such as dynamic decoupling (DD) sequences. Additionally, if abundant resources are available in the local neighborhood, a node can locally replace a failed route with a recovery path leading to the next designated node. Thus, even without a centralized routing protocol, SurfNet can still operate in a hierarchical manner. In scenarios where only a few requests are generated in the network, centralized and hierarchical routing demonstrate similar performances.

Another main mechanism in our network is the parallelism between data transfer and error correction. Thanks to the intrinsic redundancy of surface codes, servers do not need to wait for all data qubits to be present before performing error correction. In SurfNet, error correction begins as soon as all Core parts and sufficient Support parts are collected, with all missing qubits marked as erasures. The tolerance for missing Support parts depends on their specific locations on the square lattice and is influenced by the overall fidelity of collected data qubits.
Generally, in SurfNet, the parallelism between data transfer and error correction presents another trade-off between efficiency and reliability: if the SurfNet Decoder always waits for the complete surface code to arrive, the reliability of error correction improves, but the message delivery efficiency is compromised.

In addition, we deploy opportunistic routing to accelerate the transmission through entanglement-based communication channels. Recall during the transmission of the Core parts, we need to wait for reliable entanglements to be established between the sender and receiver nodes. It requires multiple rounds of pairwise entanglements along the entire path, and its difficulty scales exponentially with the distance between the sender and receiver nodes. Thus, instead of waiting for the distant entanglement to be established, we allow the Core parts to move forward along the path as soon as a reliable entanglement is established across the next few nodes, even if the movement is only across a single optical fiber. This approach allows us to divide a long routing path into segments, significantly improving both communication efficiency and fidelity. Based on simulation experiments, we fix the minimum distance for the movement to be two consecutive optical fibers.
\\

\section{Performance Evaluation}

In this section, we conduct a series of simulations to evaluate the performance of the SurfNet network design and the SurfNet Decoder. We implement our proposed routing protocol and decoder algorithm for different network scenarios, and compare their performance against baseline and mainstream models.

\subsection{Evaluation methodology}

Considering the randomness in real-world network structures, we randomly generate networks without a predefined topology. Among these, three representative networks are chosen, each falling into a distinct scenario: a network with abundant facilities of switches and servers, one with sufficient facilities, and one with insufficient facilities. For each of the three networks, we randomly assign two fidelity to each optical fiber, representing scenarios of good or poor qualities of service. 
In addition, to evaluate the performance of the SurfNet Decoder, we implement it on surface codes with four different sizes: distance-9, distance-11, distance-13, and distance-15, for which distance refers to the minimum number of data qubits for a logical operator. Then the decoder is evaluated at different rates of Pauli errors, for the calculation of error threshold. 
We also include a fixed erasure error rate to mimic real scenarios within a network. Both erasure errors and Pauli errors randomly occur on each data qubit, without following a specific error pattern.

\subsection{Benchmarks}

Our proposed network design is denoted as \textbf{SurfNet}, and is compared to the following network designs:

\begin{itemize}
\item \textbf{Raw}: the baseline version of SurfNet, which does not divide surface codes into Core and Support parts, and all data qubits are transmitted through plain channels. Error corrections are available in each server, and all switches and servers have increased capacity as they no longer need to prepare entanglements.
\item \textbf{Purification N$=$ 1,2,9 }: the mainstream quantum networks that purely utilize quantum teleportation to transfer data qubits, and employ entanglement purification to improve communication fidelity. In a straightforward implementation, purification is carried out at each optical fiber in the network. Three scenarios are considered, denoted as \textbf{``N$=$ 1", ``N$=$ 2", ``N$=$ 9"}, where N indicates the number of additional entanglement pairs consumed for purification at each optical fiber.
\end{itemize}
The above network designs are tested in three network scenarios, which are generated using the Barabasi-Albert model with over 20 nodes, with the most connected nodes chosen to be the servers and switches. Fidelity at each optical fiber is randomly assigned within the range of $[0.75, 1]$ representing good-connection scenarios and within $[0.5, 1]$ representing poor-connection scenarios. Each network design undergoes 1080 trials with different sets of network parameters, including the capacity for switches and servers, entanglement generation rate, number of requests, and number of messages in each request.

Our proposed error correction decoder is denoted as the \textit{SurfNet Decoder}, and is compared against the baseline \textit{Union-Find Decoder} implemented as in~\cite{delfosse2021almost}. Each input surface code is subjected to both erasure errors and Pauli errors, with the erasure rate fixed at 15\%, and the Pauli error rate ranging between 5.0-8.5\%. These error rates are halved at the Core part of each surface code.

\subsection{Results}\label{result}

In this section, evaluation is conducted over three evaluation metrics: \textit{fidelity}, \textit{latency}, and \textit{throughput}. Fidelity and latency are defined as the success rate and waiting time for each communication, and are computed as the average of all communications executed in the network. Throughput is calculated as the number of executed communications divided by the total number of requested communications. 

From the tables (a.1) in Fig.~\ref{fig:eval_r_3}, throughput and latency in both networks are similar, while SurfNet exhibits much higher average communication fidelity in all three network scenarios as in (a.2). This improvement is mainly due to the dual-channel characteristic of SurfNet, and the partitioning of each surface code into the Core and the Support parts proves to be effective in maintaining communication fidelity.

Fig.~\ref{fig:eval_r_3} (b.1-3) provide a detailed analysis of SurfNet with respect to different network parameters. Facility capacity in (b1) and entanglement generation rate in (b2) are the two main parameters controlling the total computing resources within the network. Their direct impacts on both throughput and fidelity are evident since, naturally, more resources can support more communications. Message per request in (b3) limits the maximum number of surface codes allowed in each request, thus controlling the consumption rate of the computing resources. From the graph, it is observed that it has little impact on fidelity but seemingly affects throughput, as the supply of computing resources can no longer meet the growing demand.
And in Fig.~\ref{fig:eval_r_3} (b.4), we analyze the effect of fidelity threshold in the routing protocol of SurfNet, computed as $1/(2^{W_C})$ where $W_C$ is the noise threshold. As mentioned earlier, the noise thresholds in SurfNet can be used to strike for a balance between communication fidelity and network throughput. As in the graph, a low fidelity threshold allows for low-fidelity communications, resulting in high network throughput but low average fidelity for each communication. Conversely, a higher fidelity threshold is more selective for communication quality, leading to lower network throughput but higher average fidelity.

\begin{figure}[t]
\centerline{\includegraphics[width= \linewidth]{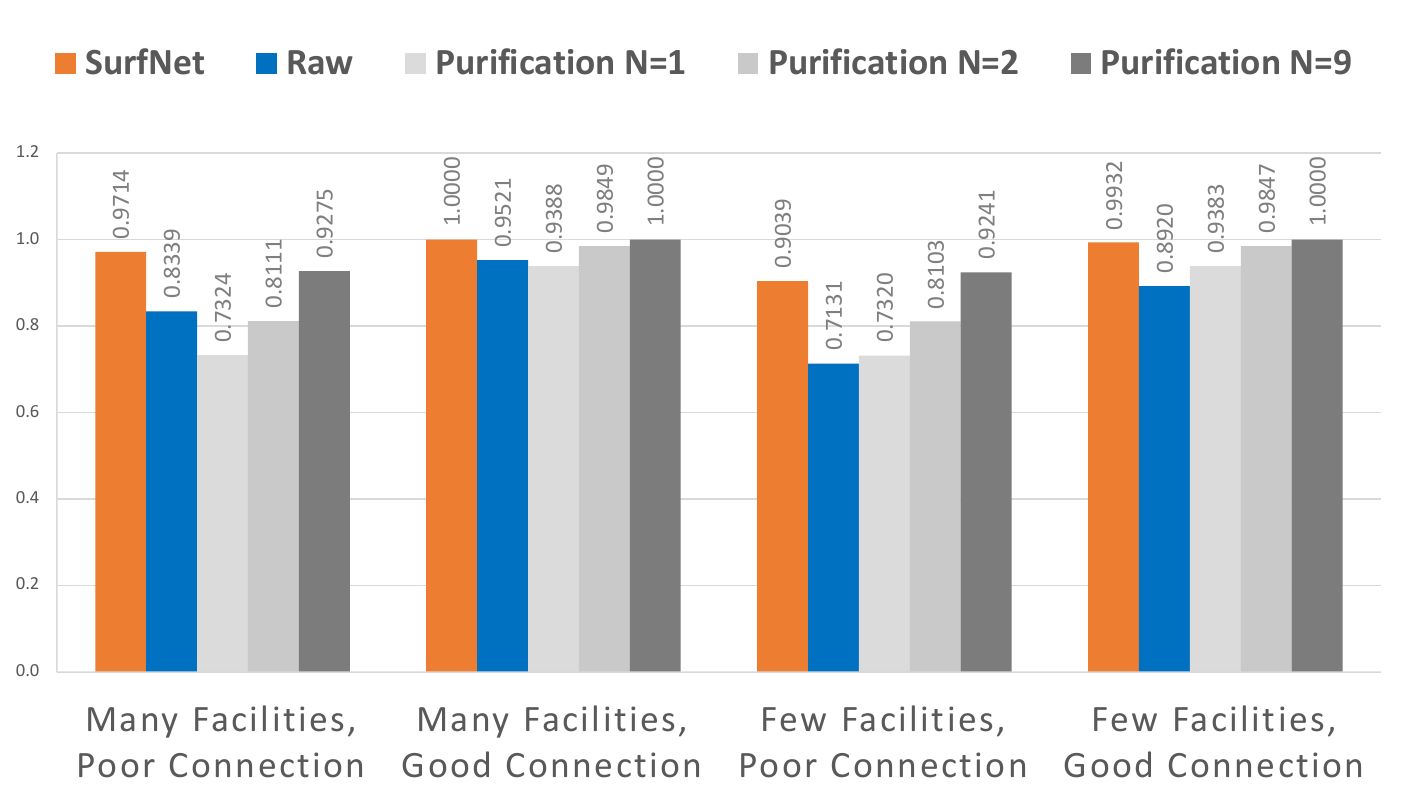}}
\caption{\text{Comparison on averaged communication fidelity} between the five routing designs in four network scenarios. In each scenario, the result is computed as the average from all 1080 trials.}
\label{fig:eval_r_3r}
\end{figure}

An overall comparison of all five network designs is shown in Fig.~\ref{fig:eval_r_3r}. We configure the routing protocols in all networks to yield similar throughputs, and focus only on the average fidelity in each network design. SurfNet achieves high fidelity across all four scenarios. It demonstrates significant advantage in networks with abundant facilities, while its drawbacks become more apparent in scenarios with limited facilities and poor connections. Improving the online execution stage of SurfNet, such as finding better recovery paths or incorporating adaptive code sizes based on quality of service, are potential directions for enhancing SurfNet in such scenarios.

Fig.~\ref{fig:eval_de} compares error correction performance between the Union-Find decoder and the SurfNet Decoder. In general, below the pauli error threshold, a surface code with a larger size tends to have a lower logical error rate. Conversely, above the pauli error threshold, a surface code with a smaller size tends to have a lower logical error rate. The SurfNet Decoder exhibits a higher threshold of 7.25\% compared with the 7.1\% of Union-Find decoder. This advantage can be further enhanced if the Core part of the surface code is configured to be larger or has a more optimized geometry, which presents potential future directions for improving the SurfNet Decoder.
\\

\begin{figure}[t]
\centerline{\includegraphics[width= \linewidth]{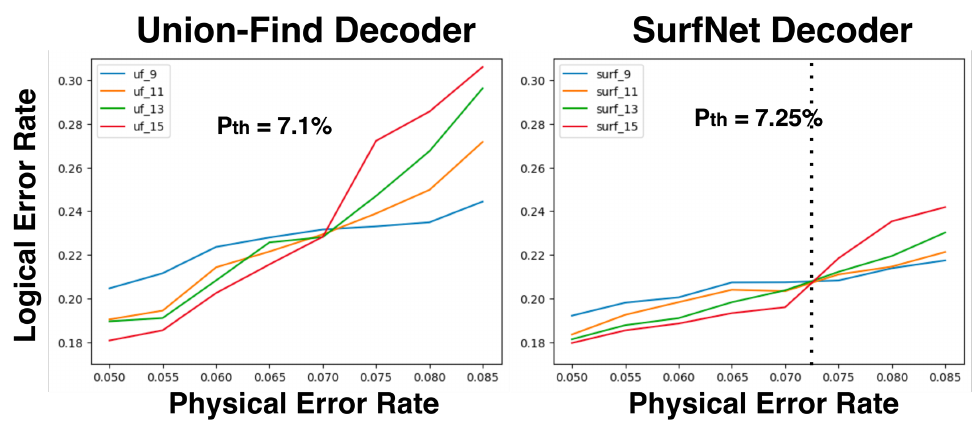}}
\caption{\text{Pauli error threshold} of surface codes using Union-Find decoder (left) and SurfNet Decoder (right).}
\label{fig:eval_de}
\end{figure}

\section{Conclusion}

In this paper, we introduced our SurfNet, an innovative quantum network utilizing surface codes as logical qubits for quantum message transmission. We proposed a novel end-to-end communication procedure employing two parallel communication channels within SurfNet: the entanglement-based channel and the plain channel. This dual-channel approach effectively addresses the limitation posed by the low entanglement generation rate at each node, and offers a way to balance the reliability and efficiency of quantum networks. To enhance communication fidelity, error corrections can be performed at servers along the routing path. The routing protocol in our network is deigned to coordinate the balance between the network throughput and average communication fidelity. Additionally, we introduced the SurfNet Decoder, which can fully leverage the modular characteristic of surface codes within our network. For evaluation, we conducted simulations for various network scenarios, comparing SurfNet against its baseline and other mainstream quantum networks. We also separately compared our SurfNet Decoder against its baseline Union-Find decoder. The results demonstrated that both SurfNet and its decoder can significantly improve the overall fidelity of quantum networks.

\section*{Acknowledgments}
The authors would like to thank all the reviewers for their
helpful comments. This work was  supported in part by the Commonwealth Cyber Initiative (CCI, cyberinitiative.org).

\bibliographystyle{ieeetr}
\bibliography{sfnet}

\end{document}